\documentclass[11pt,a4paper]{article}
\usepackage{amsmath,amssymb,amsthm}
\usepackage[pdftex]{graphicx}
\usepackage[width=0.9\textwidth,indention=0mm,justification=centerlast,
 font=small,labelfont=bf,labelsep=period]{caption}
\usepackage[pdftoolbar=true,pdfpagemode=UseOutlines,
 bookmarks=true,bookmarksopen=true,bookmarksnumbered=true,
 pdffitwindow=true,pdfcenterwindow=true,pdfpagelayout=SinglePage,
 pdfhighlight=/P,colorlinks,linkcolor=blue,citecolor=red,urlcolor=blue,
 unicode=true]{hyperref}

\def\iu{{\rm i}}
\theoremstyle{plain}
\newtheorem{theorem}{Theorem}

\title{The tangential map and associated integrable equations}
\author{V.E. Adler\thanks{L.D. Landau Institute for Theoretical Physics, 1a Ak. Semenov pr.,
Chernogolovka 142432, Russia. E-mail: adler@itp.ac.ru}}
\date{8 June 2009}

\begin{document} \maketitle
\begin{abstract}
The tangential map is a map on the set of smooth planar curves. It satisfies the
3D-consistency property and is closely related to some well-known integrable
equations.
\end{abstract}

\section{Introduction}

Let smooth planar curves $C$, $C_1$ and $C_2$ be given. Draw the tangent line to $C$
through any point $r$ on this curve, and let it meet $C_1$ in a point $r_1$ and $C_2$
in a point $r_2$. Let the tangent lines to the respective curves through these points
meet in a point $r_{12}$. When the point $r$ moves along $C$, the point $r_{12}$ draws
a new curve $C_{12}$. Thus, a local mapping on the set of planar curves is defined,
\[
 F:(C,C_1,C_2)\mapsto C_{12},
\]
which will be referred to as the {\em tangential map}. The word ``local'' means that,
first, the mapping is defined not for all triples of curves since the tangent to the
curve $C$ may not intersect $C_1$ or $C_2$, and therefore only such curves or parts of
the curves are considered where the construction is possible; second, the mapping may
be multivalued since there may be several intersections, in this case a fixed branch
of the mapping is considered.

Some properties of the tangential map are studied in this paper. It turns out to be
rather simply related to the factorization of differential operators. In its turn,
this allows to establish the relation with such integrable equations as semidiscrete
($\Delta\Delta D$) Toda lattice and, under a reduction, Hirota equation
($\Delta\Delta$). One of modifications of discrete KP equation ($\Delta\Delta\Delta$)
appears in the discrete version of the tangential map. Thus, the tangential mapping is
not a quite new object, rather it is of certain interest as one more geometric
interpretation of well-known integrable equations.

\section{3D-consistency}

The main property of the tangential map is 3D-consistency. This means that if one
starts from the curves $C,C_1,C_2,C_3$ and constructs the curves $C_{ij}=F(C,C_i,C_j)$
then the curve $C_{123}$ constructed from the triple $C_i,C_{ij},C_{ik}$ is one and
the same for any permutation of $i,j,k$. Alternatively, this can be formulated as
follows.

\begin{theorem}\label{th:3Dcons}
The tangential map satisfies the (local) identity
\begin{equation}\label{3Dcons}
\begin{aligned}
C_{123}&= F(C_1,F(C,C_1,C_2),F(C,C_1,C_3))\\
 &= F(C_2,F(C,C_1,C_2),F(C,C_2,C_3))\\
 &= F(C_3,F(C,C_1,C_3),F(C,C_2,C_3)).
\end{aligned}
\end{equation}
\end{theorem}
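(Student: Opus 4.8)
The plan is to reformulate the construction entirely in terms of tangent lines, which are the data that actually propagate through the map. For a fixed position of the moving tangent line $\ell$ to $C$ (touching $C$ at $r$), write $r_i=\ell\cap C_i$ and let $\ell_i$ be the tangent to $C_i$ at $r_i$; then by definition the vertex $r_{ij}$ of $C_{ij}=F(C,C_i,C_j)$ equals $\ell_i\cap\ell_j$. The first step is the observation that, as $\ell$ runs through the tangents of $C$, the line $\ell_i$ runs through a one-parameter subfamily of the tangents of $C_i$ and hence envelopes $C_i$, with touch point exactly $r_i$. Consequently $r_{ij}=\ell_i\cap\ell_j$ traces $C_{ij}$, and the tangent line $\ell_{ij}$ to $C_{ij}$ at $r_{ij}$ is the direction of the velocity of $r_{ij}$ along this motion.

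The second step is the reduction of the theorem to a single incidence statement. Since $r_{ij}\in\ell_i$ and $r_{ij}\in C_{ij}$, applying $F(C_i,C_{ij},C_{ik})$ with base curve $C_i$ (whose moving tangent is $\ell_i$) meets $C_{ij}$ at $r_{ij}$ and $C_{ik}$ at $r_{ik}$, with tangents $\ell_{ij},\ell_{ik}$, so it produces the point $\ell_{ij}\cap\ell_{ik}$. Hence the three expressions in \eqref{3Dcons} are $\ell_{12}\cap\ell_{13}$, $\ell_{12}\cap\ell_{23}$, $\ell_{13}\cap\ell_{23}$, and they coincide if and only if the three tangent lines $\ell_{12},\ell_{13},\ell_{23}$ are concurrent; the common point is then $C_{123}$. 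I emphasize that the three constructions are driven by one and the same motion of $\ell$, so the touch points $r_i$ slide along $C_i$ in the same way in each of them, which is what makes consistency possible.

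The third step computes $\ell_{ij}$ and proves concurrency. I would record that the lines $\ell_1,\ell_2,\ell_3$ form a triangle whose vertices are precisely $r_{12}=\ell_1\cap\ell_2$, $r_{13}=\ell_1\cap\ell_3$, $r_{23}=\ell_2\cap\ell_3$, and that each side $\ell_i$ carries the marked touch point $r_i$. Differentiating $r_{ij}=\ell_i\cap\ell_j$ and using that $\ell_i$ rotates instantaneously about its touch point $r_i$ with some angular rate $\omega_i$, the component of the velocity of $r_{ij}$ perpendicular to $\ell_i$ equals $\omega_i\,\overline{r_i r_{ij}}$ (and likewise with $j$). Crucially $\omega_i$ depends only on $C_i$ and the chosen motion, not on the partner index $j$. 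These two perpendicular-component conditions fix the direction of $\ell_{ij}$, so each $\ell_{ij}$ is a cevian issued from the vertex $r_{ij}$ whose direction is governed by the common weights $\omega_i$ and the segments $\overline{r_i r_{ij}}$. Concurrency of the three cevians is then a Ceva-type identity: I would locate where each $\ell_{ij}$ cuts the opposite side of the triangle and check that the product of the three signed ratios equals $1$, the weights $\omega_i$ and the segment lengths cancelling telescopically.

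The main obstacle is this last step: pinning down the tangent direction $\ell_{ij}$ correctly and then verifying the Ceva product, where the bookkeeping of signs and orientations of the segments $\overline{r_i r_{ij}}$ is delicate, and one must confirm that the common angular rates $\omega_i$ really drop out, so that concurrency holds identically along the whole motion rather than at isolated parameter values.
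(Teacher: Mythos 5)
Your route is genuinely different from the paper's. The paper encodes the construction by first--order operators: $r_i=(1+a^iD)(r)$, the map is the refactorization $(1+a^j_iD)(1+a^iD)=(1+a^i_jD)(1+a^jD)$, and 3D-consistency follows because the third--order operator $A=(1+T_i(a^k_j)D)(1+a^j_iD)(1+a^iD)$ is divisible on the right by each $1+a^mD$, hence has a permutation-invariant kernel, hence (being normalized by a unit constant term) is itself permutation-invariant. Your reduction of (\ref{3Dcons}) to the concurrency of the three tangents $\ell_{12},\ell_{13},\ell_{23}$ at the vertices of the triangle cut out by $\ell_1,\ell_2,\ell_3$ is correct and is exactly the geometric content of the theorem, and your kinematic determination of $\ell_{ij}$ is also right: since $r_{ij}$ stays on $\ell_i$, which instantaneously rotates about its characteristic point $r_i$, one gets $\langle\dot r_{ij},n_i\rangle=\omega_i\,\overline{r_ir_{ij}}$, and likewise for $j$. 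In spirit this is the continuous counterpart of the Desargues argument the paper quotes (due to Schief) for the discrete map: the triangles $r_{12}r_{13}r_{23}$ at parameters $t$ and $t+\varepsilon$ become perspective from the line $\ell(t)$ as $\varepsilon\to0$. What your approach buys is a coordinate-free, picture-level explanation of the concurrency; what the paper's buys is the immediate extension to $N$-fold iteration (Wronskian formula) and the link to Toda-type equations.

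The gap is in the last step. The Ceva product does not close by the cancellation you describe: the angular rates $\omega_i$ do telescope away, but the segment lengths do not --- after the side lengths of the triangle cancel, what remains is precisely the Menelaus product of the three marked points with respect to the triangle $r_{12}r_{13}r_{23}$, namely
\[
\frac{\overline{r_1r_{12}}}{\overline{r_1r_{13}}}\cdot
\frac{\overline{r_3r_{13}}}{\overline{r_3r_{23}}}\cdot
\frac{\overline{r_2r_{23}}}{\overline{r_2r_{12}}},
\]
and this equals the value required by Ceva only because $r_1,r_2,r_3$ are collinear: they all lie on the moving tangent $\ell$ to $C$. You record this collinearity in your setup but never use it in the concluding computation, and it is not optional: for three generic points $r_i$ on the sides and generic rates $\omega_i$ the lines $\ell_{ij}$ are \emph{not} concurrent (an explicit numerical configuration with non-collinear $r_i$ violates the concurrency while the same configuration with collinear $r_i$ satisfies it for every choice of the $\omega_i$). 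So the finish must invoke Menelaus for the transversal $\ell$; with that input your argument does close and yields a correct, purely geometric proof.
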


The proof is given in the next Section. The combinatorial structure of identity
(\ref{3Dcons}) is represented by assigning the arguments of the mapping (the curves in
our case) to the vertices of a cube, and the mapping itself to the faces, as shown on
\hyperref[fig:3D]{Fig. \ref*{fig:3D}}. The $N$-fold iteration of the mapping is
associated with an $(N+1)$-dimensional cube.

\begin{figure}[b!]
\centerline{\includegraphics[width=5cm]{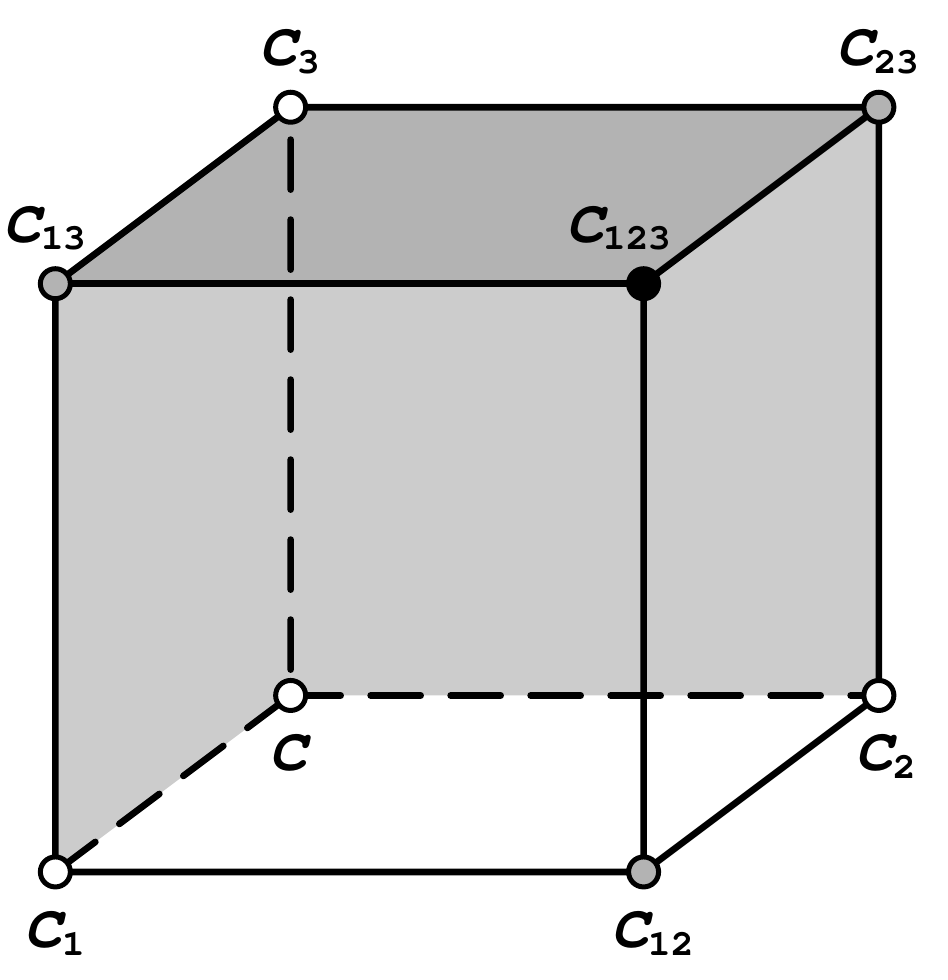}}
\caption{3D-consistency, or consistency around a cube. The white vertices correspond
to the given curves, the shaded faces show one of three possible construction ways of
the curve corresponding to the black vertex.}\label{fig:3D}
\end{figure}

The notion of 3D-consistency was formulated in \cite{NW,BS} in connection to the
discrete integrable equations of the difference KdV type (the fields in the vertices
of the cube) or to the Yang-Baxter type mappings (the fields on the edges of the
cube). Both types of equations appear as nonlinear superposition principle for
Darboux-B\"acklund transformations and are two-dimensional: two discrete independent
variables correspond to the shifts along the edges of an elementary square. In
contrast, the case of the tangential map is related to a three-dimensional equation:
in addition to the discrete variables a continuous one appears corresponding to a
parameter along the curves. Another important distinction is the asymmetry of the
tangential map: the roles of the involved curves are obviously different. In
particular, the formulas from the next section make clear that the construction of
$C_{ij}$ from $C$, $C_i$, $C_j$ is described by a differential rational mapping, while
the construction, for instance, of $C_j$ from $C$, $C_i$, $C_{ij}$ requires an
additional integration. This explains the choice of the set $C$, $C_i$, $C_j$,
$C_k,\dots$ as preferable initial data, rather that the sequence $C$, $C_i$, $C_{ij}$,
$C_{ijk},\dots$, as it is usual in the standard formulation of Yang-Baxter mappings
\cite{V}.

\begin{figure}[t!]
\centerline{\includegraphics[width=8cm]{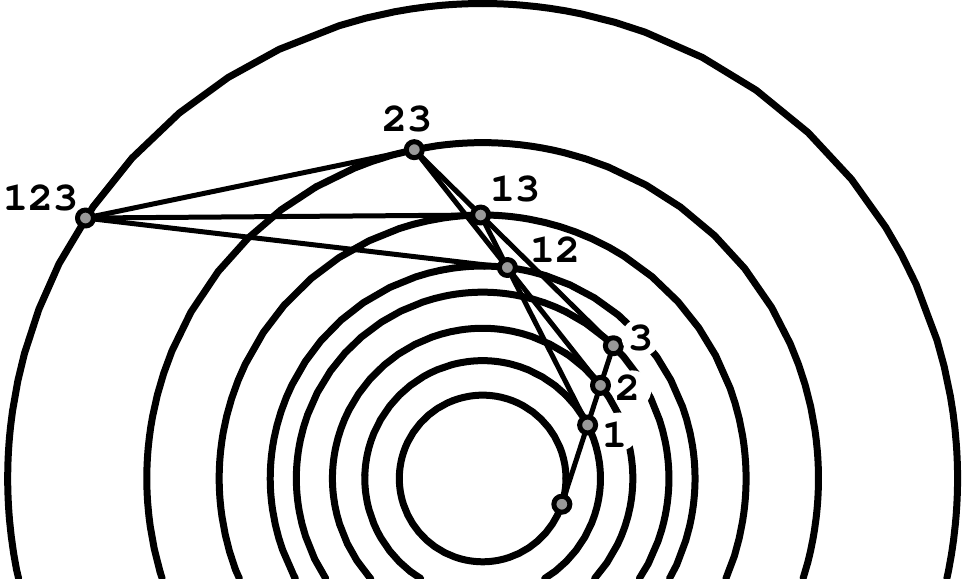}}
\caption{3D-consistency of the tangential map in the simplest case of concentric circles}
\label{fig:circles}
\end{figure}

The simplest example illustrating the tangential map and its 3D-consist\-ency property
is given by the family of concentric circles (see \hyperref[fig:circles]{Fig.
\ref*{fig:circles}}), or, in slightly more general form, by the family of logarithmic
spirals defined by equations $\rho_i=c_ie^{\gamma\varphi}$ in polar coordinates. The
fact that the tangential map does not lead out of these families is clear from the
invariance of the construction with respect to the rotations and scalings. However,
the concurrence of the last three tangents is not spontaneously obvious (as an
additional feature, in the case of family of circles, the points
$r_{12},r_{13},r_{23}$ lie on the circle with $Or_{123}$ as diameter, where $O$ is the
center of the family). Of course, this can be proved by elementary methods, but the
point is that the concurrence of the tangents occurs in much more general situation.
In \hyperref[s:log-spiral]{Section \ref*{s:log-spiral}} it will be demonstrated that
this example corresponds to the simplest case of the factorization of differential
operators with constant coefficients. \hyperref[fig:spiral]{Fig.~\ref*{fig:spiral}}
illustrates a nice property of the logarithmic spiral (in contrast to the previous
plot, this one contains only three generations of points $r,r_i,r_{ij}$, that is the
triple intersections of the lines are not shown). This property can be formulated as
the identity $F(C,C,C)=C$ for any branch of the map $F$.

\begin{theorem}
Consider the intersection points of the logarithmic spiral with its tangent. The
tangents through these points meet on the same spiral.
\end{theorem}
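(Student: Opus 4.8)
The plan is to exploit the defining symmetry of the logarithmic spiral: it is invariant under the one-parameter group of spiral similarities $S_\alpha$ (rotation by $\alpha$ about the center $O$ composed with scaling by $e^{\gamma\alpha}$). Working in the complex plane with $O$ at the origin, I would parametrize $\rho=ce^{\gamma\varphi}$ by $z(\varphi)=ce^{\lambda\varphi}$ with $\lambda=\gamma+\iu$, so that $S_\alpha$ is simply multiplication by $e^{\lambda\alpha}$ and $S_\alpha\bigl(z(\varphi)\bigr)=z(\varphi+\alpha)$. Since $z'(\varphi)=\lambda z(\varphi)$, the tangent line at $z(\varphi)$ admits the clean parametrization
\[
 T(\varphi)=\{\,z(\varphi)(1+\lambda t):t\in\mathbb{R}\,\},
\]
and $S_\alpha$ carries $T(\varphi)$ onto $T(\varphi+\alpha)$.

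The key step is an incidence criterion that isolates the self-similarity. A spiral point $z(v)$ lies on the tangent $T(u)$ iff $ce^{\lambda v}=ce^{\lambda u}(1+\lambda t)$ for some real $t$, i.e. iff $e^{\lambda(v-u)}\in 1+\lambda\mathbb{R}$ (the constant $c$ cancels). Thus, setting
\[
 U=\{\,u\in\mathbb{R}:e^{\lambda u}\in 1+\lambda\mathbb{R}\,\},
\]
the incidence $z(v)\in T(u)$ depends only on the difference $v-u$, namely $z(v)\in T(u)\iff v-u\in U$. This translation-invariant set $U$ (which contains $0$ and, for $\gamma\neq0$, infinitely many further values given by a transcendental equation that I will never need to solve) encodes every incidence between the spiral and its tangents.

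With this in hand the statement becomes almost immediate. Applying a suitable $S_{-\varphi_0}$, I may assume the chosen tangent is $T(0)$; its intersection points with the spiral other than the point of tangency are the points $z(u_1),z(u_2),\dots$ with $u_i\in U$. Picking two of them, $z(u_1)$ and $z(u_2)$, I claim that their tangents meet at $z(u_1+u_2)$: indeed $(u_1+u_2)-u_1=u_2\in U$ gives $z(u_1+u_2)\in T(u_1)$, and symmetrically $(u_1+u_2)-u_2=u_1\in U$ gives $z(u_1+u_2)\in T(u_2)$. Hence the common point of the two tangents lies on the spiral, which is exactly the assertion; in the notation of the previous section this is the identity $F(C,C,C)=C$, with $r_{12}=z(\varphi_0+u_1+u_2)$.

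What makes the argument short is precisely the difference-only form of the incidence criterion, which turns the problem into the trivial remark that $U$ is its own reflection through any of its points. The only thing left to check is non-degeneracy: that $T(u_1)$ and $T(u_2)$ are genuinely distinct and non-parallel, so that $z(u_1+u_2)$ is their unique finite intersection. Comparing the tangent directions $\lambda e^{\lambda u_1}$ and $\lambda e^{\lambda u_2}$, the two lines are parallel exactly when $e^{\lambda(u_1-u_2)}\in\mathbb{R}$, i.e. when $u_1-u_2\in\pi\mathbb{Z}$; this is where I expect the minor technical work to sit, and it fails only on the degenerate branches one excludes anyway (in particular the concentric-circle limit $\gamma\to0$, where the tangent has no genuine re-intersection). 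The conceptual content is carried entirely by the spiral-similarity symmetry.
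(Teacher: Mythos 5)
Your argument is correct, and it rests on the same underlying mechanism as the paper's: the invariance of the spiral under the one-parameter group of spiral similarities, which makes the parameter shifts of the tangent--spiral intersections additive. The executions differ in a way worth noting. The paper obtains the statement as a corollary of its general machinery: in the parametrization $r=e^{(\gamma+\iu)t}$ the coefficients $a^k$ are constants, the tangential map (\ref{Ta}) fixes constant coefficients, so $F$ acts as the rotational dilation $r\mapsto(1+(\gamma+\iu)a^j)(1+(\gamma+\iu)a^k)\,r$, and the constraint $1+(\gamma+\iu)a^k=e^{(\gamma+\iu)\delta_k}$ (equivalently your condition $e^{\lambda\delta}\in 1+\lambda\mathbb R$, i.e.\ $\cos\delta-\gamma\sin\delta=e^{-\gamma\delta}$) turns this into the shift $r(t)\mapsto r(t+\delta_j+\delta_k)$. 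You bypass formula (\ref{Ta}) entirely and verify the concurrence from scratch: since the incidence $z(v)\in T(u)$ depends only on $v-u$, the point $z(u_1+u_2)$ lies on both $T(u_1)$ and $T(u_2)$ for free. This buys a self-contained, elementary proof independent of Section~3, at the price of having to check separately that the two tangents are distinct and non-parallel so that $z(u_1+u_2)$ really is \emph{the} meeting point; the paper gets this implicitly from the finiteness of $r_{jk}$. You defer that check as ``minor technical work,'' and it is, but you should close it: if $u_1,u_2\in U$ with $u_1-u_2=\pi k$, then evaluating $\cos\delta-\gamma\sin\delta=e^{-\gamma\delta}$ at both roots forces $(-1)^k=e^{-\gamma\pi k}$, which for $\gamma\neq0$ has only the solution $k=0$; so parallelism never occurs on a genuine spiral, not merely ``on degenerate branches.'' (Your aside that $U$ is ``its own reflection through any of its points'' is not quite what you use --- what you use is simply $u_1,u_2\in U\Rightarrow$ both differences of $u_1+u_2$ with $u_1,u_2$ lie in $U$ --- but the argument itself is sound.)
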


\begin{figure}[t!]
\centerline{\includegraphics[width=8cm]{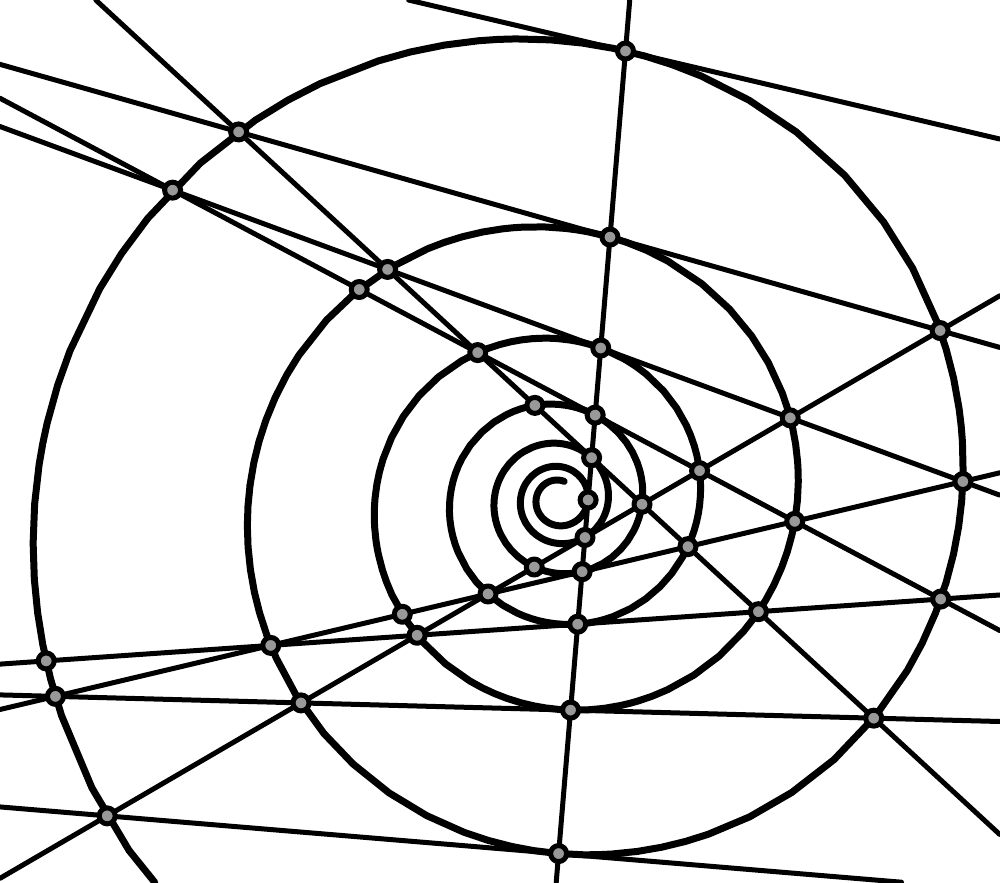}}
\caption{Jacob Bernoulli said ``eadem mutata resurgo'' on the logarithmic spiral;
now in infinite number of ways}
\label{fig:spiral}
\end{figure}

\section{Factorization of differential operators}

Let the curve $C$ be given in parametric form $r=r(t)$. Then the point of intersection
with the curve $C_i$ is given, in the affine coordinates, by an equation of the form
\begin{equation}\label{ri}
 r_i(t)=r(t)+a^i(t)\dot r(t),\qquad \dot r:=D(r):=\frac{dr}{dt}
\end{equation}
with some coefficient $a^i$ (here and further on the subscript $i$ marks the
quantities associated with the edge $CC_i$ of the combinatorial cube). This equation
defines a parametrization of the curve $C_i$. It plays the role of an auxiliary linear
problem for one branch of the tangential map. The curve $C_{ij}$ is defined from the
compatibility condition
\begin{align*}
  r_{ij}&=r_j+a^i_j\dot r_j=r+a^j\dot r+a^i_jD(r+a^j\dot r)\\
        &=r_i+a^j_i\dot r_i=r+a^i\dot r+a^j_iD(r+a^i\dot r),
\end{align*}
where the coefficient $a^i_j$ correspond to the edge $C_jC_{ij}$.

Equating the coefficients for the linearly independent vectors $\dot r,\ddot r$ yields
the equations
\begin{equation}\label{aa}
 a^ja^i_j=a^ia^j_i,\quad (1+\dot a^j)a^i_j+a^j=(1+\dot a^i)a^j_i+a^i
\end{equation}
which can be solved as the differential mapping $(a^i,a^j)\mapsto(a^i_j,a^j_i)$
\begin{equation}\label{Ta}
 a^i_j=\frac{(a^i-a^j)a^i}{a^i-a^j+a^i\dot a^j-\dot a^ia^j}.
\end{equation}
This is the formula which defines the action of the tangential map on the coefficients
$a$. Alternatively, one can use the first of equations (\ref{aa}) in order to
introduce the potential $v$ accordingly to the formula $a^i=v/v_i$, then the second
equation rewrites as the differential mapping $f:(v,v_i,v_j)\mapsto v_{ij}$,
\begin{equation}\label{TTv}
 v_{ij}=\frac{v_iv_j}{v}+\frac{\dot v_iv_j-v_i\dot v_j}{v_j-v_i}.
\end{equation}
The property of 3D-consistency is formulated in terms of the variables $a$ as the
commutativity of the operators $T_i:a^j\to a^j_i$ which define the shift along the
edges $CC_i$:
\begin{equation}\label{TTa}
 T_iT_j(a^k)=T_jT_i(a^k),
\end{equation}
and in terms of the variables $v$ as the identity of type (\ref{3Dcons}):
\begin{equation}\label{TTTv}
\begin{aligned}
v_{123}&= f(v_1,f(v,v_1,v_2),f(v,v_1,v_3))\\
 &= f(v_2,f(v,v_1,v_2),f(v,v_2,v_3))\\
 &= f(v_3,f(v,v_1,v_3),f(v,v_2,v_3)).
\end{aligned}
\end{equation}
Both identities can be proved straightforwardly, although the computation is rather
tedious. It can be avoided by the following argument.

\begin{proof}[Proof of {\hyperref[th:3Dcons]{Theorem \ref*{th:3Dcons}}}]
\label{proof:3Dcons}
The above compatibility condition is equivalent to the equality
\begin{equation}\label{fac2}
 (1+a^j_iD)(1+a^iD)=(1+a^i_jD)(1+a^jD),
\end{equation}
that is the definition of the tangential map amounts to the reconstruction of an
ordinary second order differential operator from its kernel, under the condition of
unitary constant term which is equivalent to affine normalization. Consider the
differential operator
\[
 A=(1+T_i(a^k_j)D)(1+a^j_iD)(1+a^iD)
\]
corresponding to one of three possible ways of computing $r_{ijk}$. Accordingly to
(\ref{fac2}), $A$ is divisible from the right not only by $1+a^iD$, but also by
$1+a^jD$. Moreover, two left factors of $A$ can be rewritten, again accordingly to
(\ref{fac2}), as $(1+T_i(a^j_k)D)(1+a^k_iD)$, that is operator $A$ does not
changes under permutation of $j$ and $k$. But this means that it is divisible from the
right by $1+a^kD$ as well. Therefore, the kernel of $A$ is invariant with respect to
any permutation of indices. Since a differential operator is uniquely defined by its
kernel (up to a scalar factor which is fixed here by the condition that the constant
term is unitary), hence the operator $A$ itself is invariant with respect to the
permutations.
\end{proof}

Now it is clear that an $N$-fold tangential map corresponds to a differential operator
of order $N$ divisible from the right by operators $1+a^iD$, $i=1,\dots,N$. This
immediately leads to the Wronskian formula (for each of two components of $r$)
\[
r_{1,2,\dots,N}=\frac{\det
 \begin{pmatrix}
  r      & \varphi_1      & \varphi_2      & \dots & \varphi_N     \\
  \dot r & \dot\varphi_1  & \dot\varphi_2  & \dots & \dot\varphi_N \\
  \vdots & \vdots & \vdots & & \vdots \\
  D^N(r) & D^N(\varphi_1) & D^N(\varphi_2) & \dots & D^N(\varphi_N)
 \end{pmatrix}}
 {\det\begin{pmatrix}
  \dot\varphi_1 & \dot\varphi_2 & \dots & \dot\varphi_N \\
  \vdots & \vdots & & \vdots \\
  D^N(\varphi_1) & D^N(\varphi_2) & \dots & D^N(\varphi_N)
 \end{pmatrix}}
\]
where $a^i=-\varphi_i/\dot\varphi_i$.

More simple mappings of type (\ref{Ta}) and (\ref{TTv}) are obtained from the
factorization of operators normalized by the condition of unitary leading term:
\[
 (D-a^j_i)(D-a_i)=(D-a^i_j)(D-a_j),
\]
which is equivalent to
\[
 (T_i-1)(a^j)=(T_j-1)(a^i),\quad \dot a^i-\dot a^j=a_ia^j_i-a^ja^i_j,
\]
and brings to the maps
\begin{equation}\label{Ta'}
 a^i_j=a^i+\frac{\dot a^i-\dot a^j}{a^i-a^j}
\end{equation}
and (under the substitution $a^i=v_i-v$)
\begin{equation}\label{TTv'}
 v_{ij}=v_i+v_j-v+\frac{\dot v_i-\dot v_j}{v_i-v_j}
\end{equation}
The 3D-consistency of these maps is proved analogously. Clearly, equations (\ref{Ta}),
(\ref{TTv}) and (\ref{Ta'}), (\ref{TTv'}) are interpreted as 3-dimensional equations
on $\mathbb Z^2\times\mathbb R$, with the fields $a$ corresponding to the edges of the
lattice and $v$ corresponding to the vertices. These equations are related via simple
substitutions to the semidiscrete Toda lattice, introduced in \cite{LPS} for the first
time (to the best of author's knowledge), see also \cite{AS}.

\section{Examples and reductions}

\subsection{Logarithmic spirals}\label{s:log-spiral}

It is convenient to use the complex notation in this example, assuming
$r=e^{(\gamma+\iu)t}$ (the case $\gamma=0$ correspond to the circle). Then
$r_k=r+a^k\dot r=(1+\gamma a^k+\iu a^k)e^{(\gamma+\iu)t}$ and these curves are
homothetic to the original one if and only if the coefficients $a^k$ are constant. The
action of the map (\ref{Ta}) on the constant coefficients is identical: $a^k_j=a^k$,
therefore the tangential map amounts to the rotational dilation
\[
 r_{jk}=(1+\gamma a^j+\iu a^j)(1+\gamma a^k+\iu a^k)r
\]
which preserves the family of curves under consideration. The $N$-fold mapping is
given by analogous explicit formula, so that this example can be considered trivial.
However, even this example demonstrates that the established relation between the
tangential map and differential operators is not one-to-one and depends on the choice
of initial curve and its parametrization. The mappings corresponding to the same
operators (that is, with the same coefficients $a$) can be regarded as locally
equivalent, but the global picture may be quite different. For example, the tangential
map has four branches in the case of concentric circles (real if the radius of $C$ is
less than the radii of $C_1,C_2$) and infinite number of branches in the case of
logarithmic spirals.

In order to obtain the auto-mapping shown on \hyperref[fig:spiral]{Fig.
\ref*{fig:spiral}}, one has to assume the additional constraint $r_k(t)=r(t+\delta_k)$
(then $r_{jk}(t)=r(t+\delta_j+\delta_k)$, $r_{jkl}(t)=r(t+\delta_j+\delta_k+\delta_l)$
and so on), that is
\[
 e^{(\gamma+\iu)(t+\delta_k)}=(1+\gamma a^k+\iu a^k)e^{(\gamma+\iu)t}.
\]
This implies that $\delta_k$ are roots of the transcendental equation
\[
 \cos\delta-\gamma\sin\delta=\exp(-\gamma\delta),
\]
and the coefficients $a^k$ are expressed by the formula
\[
 a^k=\exp(\gamma\delta_k)\sin\delta_k.
\]
It can be derived from here (this is left to the reader as an easy exercise), that the
boundary of the domain free of the lines on Fig. \ref{fig:spiral} is approximated by a
parabola. The numeric values for this plot are $\gamma=0.1$, $\delta_1=5.24$,
$\delta_2=7.25,\dots$

\subsection{Any curve}

The previous example suggests that a picture with good global behavior of the curves
can be obtained if the starting curve is the circle $r=e^{\iu t}$ again, and the
coefficients $a^k(t)$ are almost constant functions with periods commensurable with
$\pi$. For instance, the left plot on \hyperref[fig:ex2]{Fig. \ref*{fig:ex2}}
corresponds to
\[
 a^1=1+\frac15\cos\frac32t,\quad
 a^2=2+\frac1{10}\cos\Bigl(\frac{t}2+\frac{\pi}4\Bigr).
\]
The right plot corresponds to the choice
\[
 a^1=4+\sin t,\quad a^2=4+\cos t,
\]
here the tangential map brings to the curve with cusps.

\begin{figure}
\centerline{\includegraphics[height=6cm]{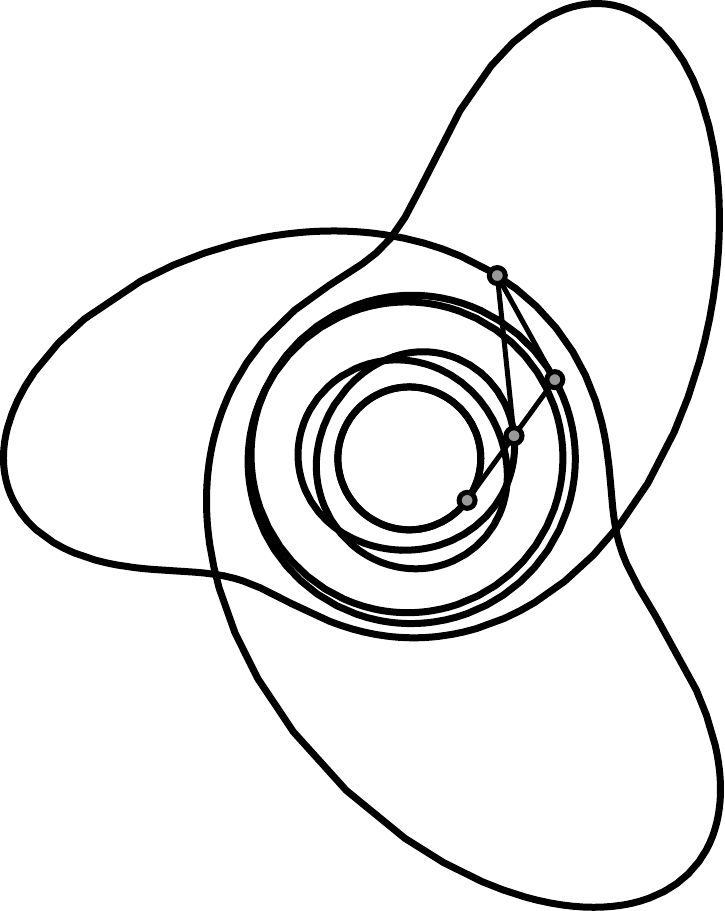}\qquad\qquad
\includegraphics[height=6cm]{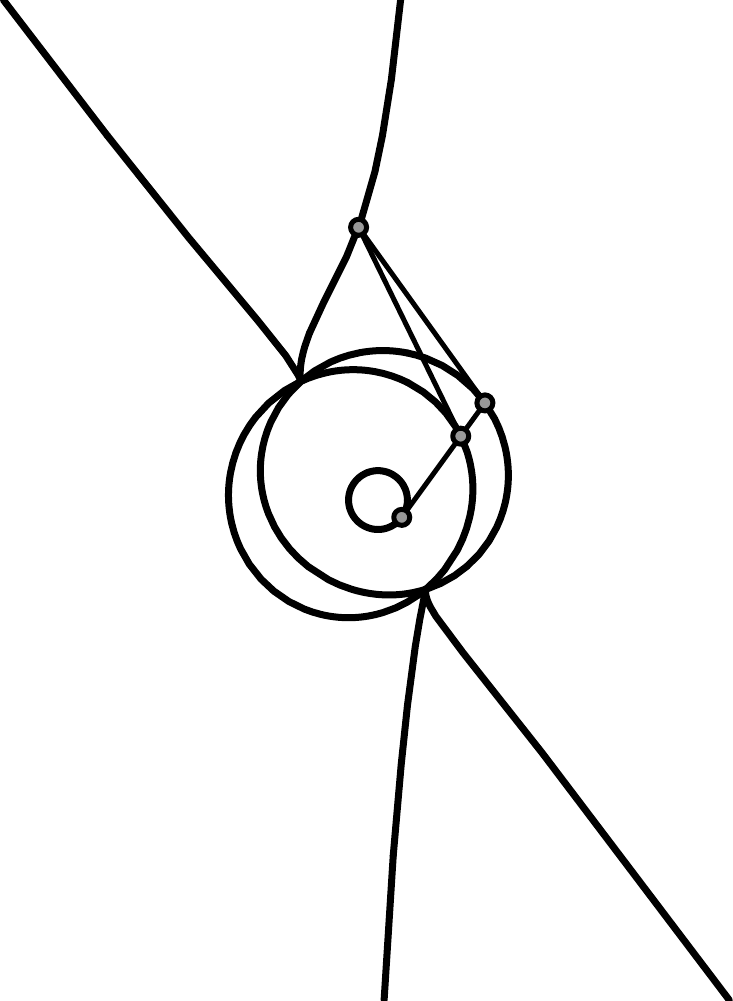}}
\caption{Examples of the tangential map}\label{fig:ex2}
\end{figure}

\subsection{Loxodromes}

We will say, slightly abusing the terminology, that a curve $\widetilde C$ is a {\em
loxodrome} for a given curve $C$ if it intersects the tangents to $C$ under a constant
angle $\gamma$ (in particular, if $\gamma=\pi/2$, then $\widetilde C$ is an involute
of $C$).

\hyperref[th:loxodrome]{Theorem \ref*{th:loxodrome}} below demonstrates that the
tangential map preserves this type of relation between the curves (see
\hyperref[fig:loxodrome]{Fig. \ref*{fig:loxodrome}}). As a preliminary, it is
convenient to introduce the parameter on the curve $C$ accordingly to equations
\begin{equation}\label{y}
 \dot r=y(t)\tau,\quad \dot\tau=\nu,\quad \dot\nu=-\tau
\end{equation}
where $\tau,\nu$ are unit tangent and normal vectors. Obviously, the function $y$ is
the radius of curvature and the relation to the natural parametrization by arc length
$r=r(s)$ is
\[
 y(t(s))=1/\kappa(s),\quad dt=\kappa(s)ds.
\]
This choice of parametrization is explained by the fact that its form is preserved for
the curve $\widetilde C$ as well. Indeed, the equation of a point on this curve is
$\tilde r= r+a\dot r=r+ay\tau$. Then
\[
 \dot{\tilde r}=(y+D(ay))\tau+ay\nu
\]
and since $\tilde r$ meets $\tau$ under the constant angle $\gamma$ (reckoned in the
direction of the normal $\nu$, for definiteness), hence
\begin{equation}\label{ya}
 y+D(ay)=ay\cot\gamma.
\end{equation}
In virtue of this constraint, the equalities
\[
 \dot{\tilde r}=\tilde y\tilde\tau,\quad
 \dot{\tilde\tau}=\tilde\nu,\quad \dot{\tilde\nu}=-\tilde\tau
\]
hold, with
\[
 \tilde y=\frac{ay}{\sin\gamma},\quad
 \tilde\tau=\tau\cos\gamma+\nu\sin\gamma,\quad
 \tilde\nu=\nu\cos\gamma-\tau\sin\gamma.
\]
Thus, we have proved that the choice of $t$ as the parameter brings to equations of
the type (\ref{y}) for the loxodrome $\widetilde C$ as well. If the function $y$ is
given then the constraint (\ref{ya}) is the determining equation for the coefficient
$a$, and the loxodrome is constructed by any solution of it.

\begin{figure}[t!]
\centerline{\includegraphics[width=7cm]{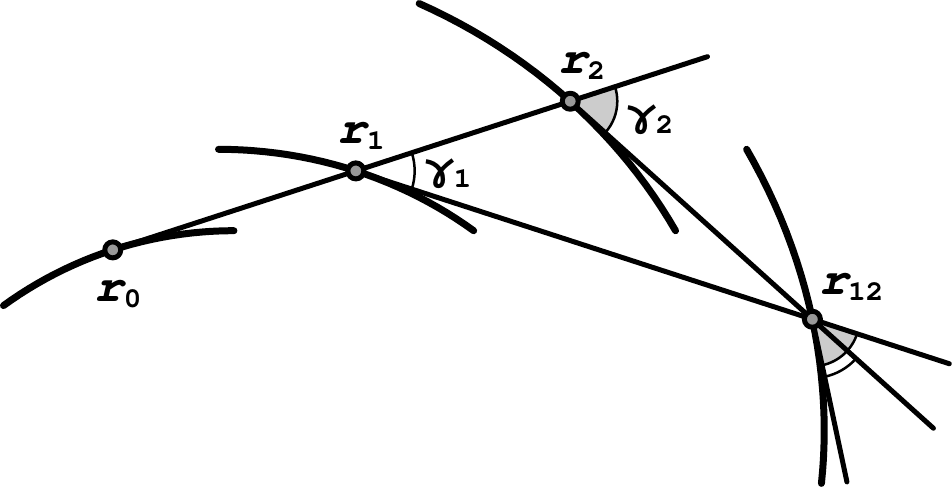}}
\caption{Loxodromic reduction of the tangential map}\label{fig:loxodrome}
\end{figure}

\begin{theorem}\label{th:loxodrome}
Let curves $C_i$ and $C_j$ meet tangents to a curve $C$ under constant angles
$\gamma^i$ and $\gamma^j$ respectively, and $\gamma^i\ne\gamma^j$. Then the curve
$C_{ij}=F(C,C_i,C_j)$ meets tangents to $C_i$ under the angle $\gamma^j$ and tangents
to $C_j$ under the angle $\gamma^i$.
\end{theorem}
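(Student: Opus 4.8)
The plan is to work entirely in the parametrization (\ref{y}) introduced just before the theorem, where the loxodromic condition becomes the algebraic-differential constraint (\ref{ya}). First I would record what the hypothesis gives us: since $C_i$ meets tangents to $C$ under angle $\gamma^i$, the coefficient $a^i$ (with $r_i = r + a^i\dot r = r + a^iy\tau$) satisfies
\[
 y + D(a^iy) = a^iy\cot\gamma^i,
\]
and similarly $y + D(a^jy) = a^jy\cot\gamma^j$. Moreover, the computation preceding the theorem shows that $C_i$ itself carries a parametrization of the same type (\ref{y}), with radius of curvature $\tilde y^i = a^iy/\sin\gamma^i$ and rotated frame $\tilde\tau^i = \tau\cos\gamma^i + \nu\sin\gamma^i$. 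The claim to be proved is that $C_{ij}$ meets tangents to $C_i$ under angle $\gamma^j$; by the very meaning of (\ref{ya}), this amounts to showing that the coefficient $a^i_j$ produced by the tangential map (\ref{Ta}) — which describes $C_{ij} = C_i + a^i_j\dot r_i$ along the edge $C_iC_{ij}$ — satisfies the loxodromic constraint on $C_i$ with angle $\gamma^j$, namely
\[
 \tilde y^i + D(a^i_j\tilde y^i) = a^i_j\tilde y^i\cot\gamma^j.
\]

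The key steps, in order, are as follows. First I would rewrite the two hypotheses as expressions for the logarithmic derivatives $\dot a^i/a^i$ and $\dot a^j/a^j$ in terms of $\dot y/y$ and the cotangents; explicitly (\ref{ya}) rearranges to
\[
 1 + \dot a^i + a^i\frac{\dot y}{y} = a^i\cot\gamma^i,
\]
so that $\dot a^i = a^i(\cot\gamma^i - \dot y/y) - 1$, and likewise for $j$. Second, I would substitute these into the tangential-map formula (\ref{Ta}) for $a^i_j$. The crucial simplification is that the denominator $a^i - a^j + a^i\dot a^j - \dot a^i a^j$, upon inserting the two expressions for $\dot a^i,\dot a^j$, should lose all dependence on $\dot y/y$ — the $y$-derivative terms cancel because they enter each product symmetrically — leaving a purely algebraic expression in $a^i,a^j$ and the two cotangents. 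Third, with $a^i_j$ in hand I would form $\tilde y^i + D(a^i_j\tilde y^i)$ using $\tilde y^i = a^iy/\sin\gamma^i$ and the already-derived formula for $\dot a^i$, and check that the result equals $a^i_j\tilde y^i\cot\gamma^j$.

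The symmetric claim — that $C_{ij}$ meets tangents to $C_j$ under angle $\gamma^i$ — then follows by interchanging $i\leftrightarrow j$ throughout, using that the map (\ref{Ta}) sends $(a^i,a^j)\mapsto(a^i_j,a^j_i)$ symmetrically and that (\ref{fac2}) guarantees the two reconstructions of $C_{ij}$ agree. I expect the main obstacle to be the second step: verifying that the $\dot y$-contributions genuinely cancel in the denominator of (\ref{Ta}) and that the surviving algebraic identity collapses to the clean constraint with angle $\gamma^j$ rather than some mixed expression involving both $\gamma^i$ and $\gamma^j$. This is where the hypothesis $\gamma^i\ne\gamma^j$ must be used, since it guarantees $a^i\ne a^j$ (the denominator's leading algebraic part is controlled by $\cot\gamma^i - \cot\gamma^j$) and hence that the map is nondegenerate; geometrically this is exactly the swap of angles across the quadrilateral, so the calculation should close, but the cancellation is the delicate point to confirm rather than assume.
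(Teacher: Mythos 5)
Your plan is essentially the paper's own proof: work in the parametrization (\ref{y}), encode the hypotheses as the two constraints $y+D(a^ky)=a^ky\cot\gamma^k$, substitute into (\ref{Ta}), and check that the transported coefficient again satisfies a constraint of the form (\ref{ya}). The cancellation you single out as the delicate point does occur: writing $\dot a^k=a^k(\cot\gamma^k-\dot y/y)-1$ one finds $a^i\dot a^j-\dot a^ia^j=a^ia^j(\cot\gamma^j-\cot\gamma^i)-a^i+a^j$, so the denominator of (\ref{Ta}) collapses to $a^ia^j(\cot\gamma^j-\cot\gamma^i)$ and
\[
 a^i_j=\frac{a^i/a^j-1}{\cot\gamma^j-\cot\gamma^i},
\]
which is exactly the intermediate formula the paper records; the hypothesis $\gamma^i\ne\gamma^j$ enters precisely here (note it does not force $a^i\ne a^j$ pointwise --- the two first-order constraints are different equations but their solutions may still cross; what matters is $\cot\gamma^j-\cot\gamma^i\ne0$).

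There is, however, one concrete error in your setup. In the paper's conventions $a^i_j$ is the coefficient on the edge $C_jC_{ij}$, i.e.\ $r_{ij}=r_j+a^i_j\dot r_j$, not $r_{ij}=r_i+a^i_j\dot r_i$ as you write; the edge $C_iC_{ij}$ carries $a^j_i$, and by the first equation of (\ref{aa}) one has $a^i_j=(a^i/a^j)\,a^j_i\ne a^j_i$ in general. Consequently the identity you propose to verify, $\tilde y^i+D(a^i_j\tilde y^i)=a^i_j\tilde y^i\cot\gamma^j$ with $\tilde y^i=y_i$, is a statement about the curve traced by $r_i+a^i_j\dot r_i$, which is not $C_{ij}$, and the computation will not close as written. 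The correct target for the coefficient $a^i_j$ delivered by (\ref{Ta}) is the constraint on $C_j$, namely $y_j+D(a^i_jy_j)=a^i_jy_j\cot\gamma^i$ (this is what the paper verifies), the other half of the conclusion then following by interchanging $i$ and $j$ exactly as you indicate. Once the indices are matched to the right edges, your argument coincides with the paper's.
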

\begin{proof}
Assume the parametrization (\ref{y}) for the curve $C$. Then, as it was shown before,
the parameters $a^k$ of the tangential map and the functions $y_k$ for the loxodromes
are related by
\[
 y+D(a^ky)=a^ky\cot\gamma^k,\quad y_k=\frac{a^ky}{\sin\gamma^k},\quad k=i,j.
\]
It is easy to prove that in virtue of these constraints the tangential map (\ref{Ta})
takes the form
\[
 a^i_j=\frac{a^i/a^j-1}{\cot\gamma^j-\cot\gamma^i}
\]
and moreover, the identity
\[
 y_j+D(a^i_jy_j)=a^i_jy_j\cot\gamma^i
\]
holds, which is the constraint of the form (\ref{ya}) for the functions $y_j$ and
$a^i_j$. But this means exactly that the curve $C_{ij}$ is a loxodrome for $C_j$,
corresponding to the angle $\gamma^i$.
\end{proof}

Formally, this statement looks like well-known Bianchi theorem on the permutability of
Darboux-B\"acklund type transformations (see \hyperref[fig:Bianchi]{Fig.
\ref*{fig:Bianchi}}). However, it is clear from the proof that in our case the
situation is more simple: indeed, B\"acklund transformations amount to solving of
Riccati equations, while the construction of the loxodromes amounts, accordingly to
(\ref{ya}), to a simple quadrature. The superposition principle for the
transformations under consideration turns out to be linear:
\[
 \sin(\gamma^i-\gamma^j)y_{ij}=\sin(\gamma^i)y_i-\sin(\gamma^j)y_j.
\]
A genuine Darboux transformation leading to the nonlinear superposition principle is
provided by the reduction presented in the next example.

\begin{figure}[t!]
\centerline{\includegraphics[width=3.5cm]{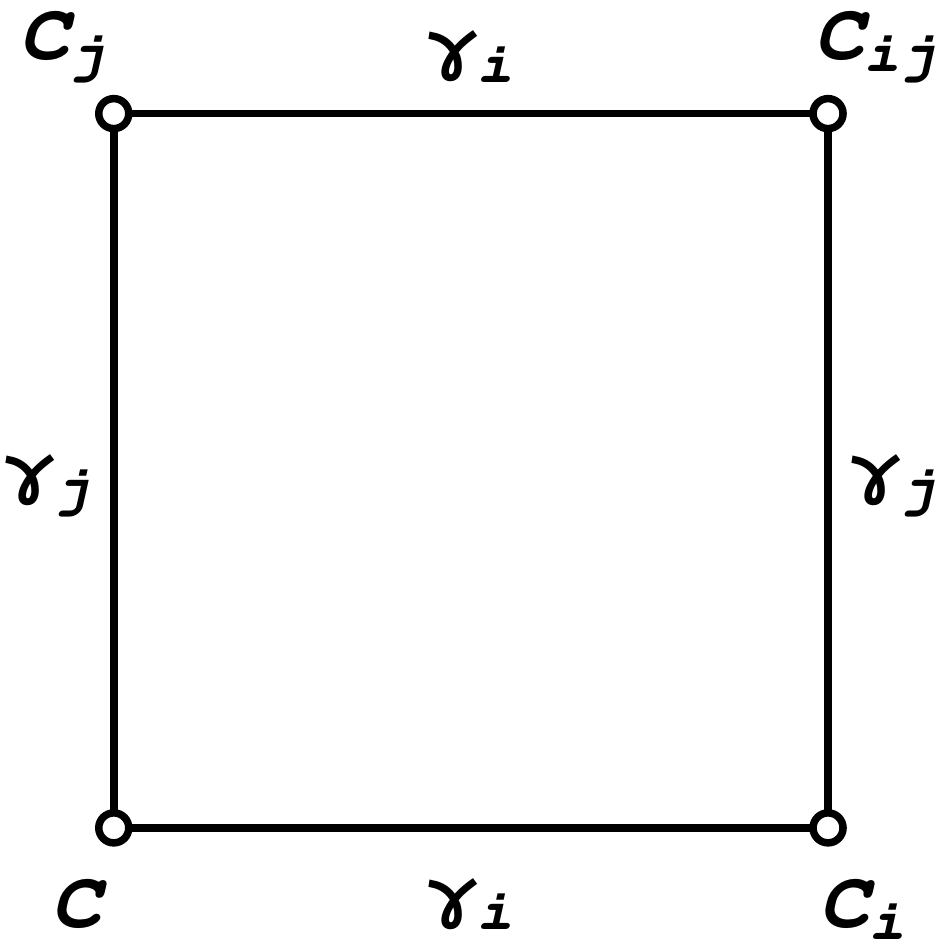}}
\caption{Bianchi diagram}\label{fig:Bianchi}
\end{figure}

\subsection{Darboux transformation}

Let $H_{\gamma}$ be the homothety with a coefficient $\gamma\ne1$ with respect to a
fixed point (origin) in the plane. We will say that the curves $C$, $\widetilde C$ are
in the {\em tangential correspondence} with parameter $\gamma$ if the tangent to the
curve $C$ through any point $r$ meets $H_{\gamma}(\widetilde C)$ in the point
$H_{\gamma}(\tilde r)$, and the tangent to $\widetilde C$ through $\tilde r$ meets
$H_{\gamma}(C)$ in the point $H_{\gamma}(r)$ (see Fig. \ref{fig:t-corr}).

\begin{figure}[b!]
\centerline{\includegraphics[width=7cm]{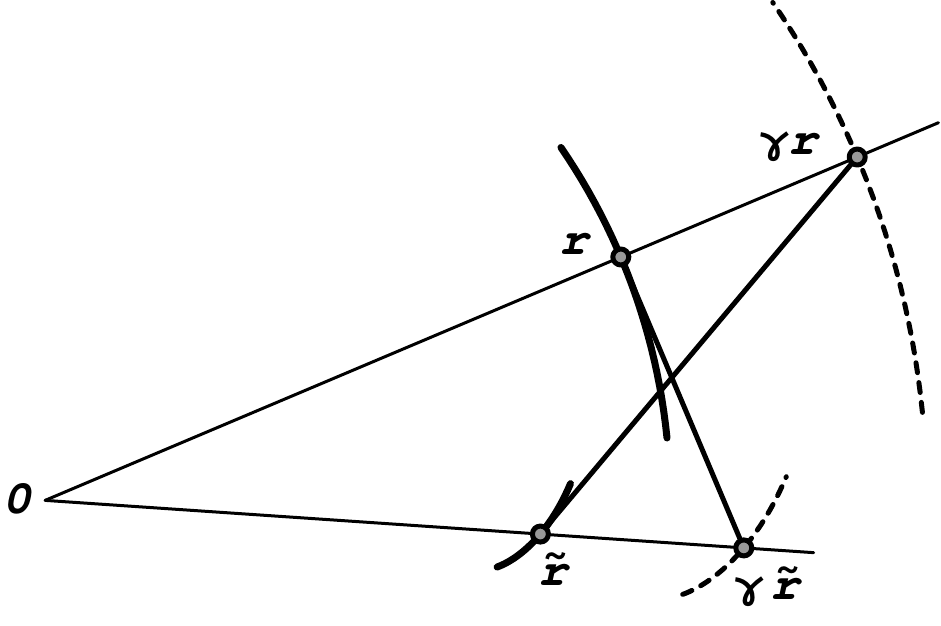}}
\caption{Tangential correspondence}\label{fig:t-corr}
\end{figure}

As in the example with the loxodromes, this notion defines some reduction of the
tangential map which can be studied more conveniently in some special parametrization
of the curves. The points on the curves $C$, $\widetilde C$ are related by equations
of the form
\[
 \gamma\tilde r(t)=r(t)+a(t)\dot r(t),\quad \gamma r(t)=\tilde r(t)+b(t)\dot{\tilde r}(t),
\]
which imply that the vector-functions $r(t),\tilde r(t)$ satisfy linear second order
ODEs
\begin{gather*}
 ab\ddot r+(\dot ab+a+b)\dot r+(1-\gamma^2)r=0,\\
 ab\ddot{\tilde r}+(a\dot b+a+b)\dot{\tilde r}+(1-\gamma^2)\tilde r=0.
\end{gather*}
Therefore, the ratio of the first and the last coefficients is the same for $r$ and
$\tilde r$, that is it is an invariant of the tangential correspondence. It is
convenient to use such a parametrization that this ratio is constant. Let
\[
 \lambda ab=1-\gamma^2,\quad \lambda=(1-\gamma^2)\mu,
\]
then the equations take the form
\begin{equation}\label{rurr}
 \ddot r+u\dot r+\lambda r=0,\quad
 \ddot{\tilde r}+\tilde u\dot{\tilde r}+\lambda\tilde r=0
\end{equation}
where functions $u,\tilde u$ are related to the coefficient $a$ via the pair of
Riccati equations
\[
 \dot a+1-ua+\mu a^2=0,\quad -\dot a+1-\tilde ua+\mu a^2=0.
\]
The first one is linearized by the substitution $a=-\phi/\dot\phi$ which brings to
equation $\ddot\phi+u\dot\phi+\mu\phi=0$. Thus, the curve $\widetilde C$ is
constructed by use of a particular scalar solution of the differential equation for
the original curve $C$, at the value of the parameter $\lambda=\mu$, that is the
tangential correspondence is nothing but an example of Darboux transformation.

\begin{theorem}
Let a curve $C$ be in the tangential correspondence with curves $C_i$, $C_j$, with
parameters $\gamma^i$, $\gamma^j$ respectively, and $\gamma^i\ne\gamma^j$. Then an
unique curve $C_{ij}$ exists which is in the tangential correspondence with the curves
$C_i$, $C_j$, with the parameters $\gamma^j$, $\gamma^i$ respectively. Moreover,
\[
 H_{\gamma^i\gamma^j}(C_{ij})=F(C,H_{\gamma^i}(C_i),H_{\gamma^j}(C_j)).
\]
\end{theorem}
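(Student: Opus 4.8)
The plan is to reduce the statement to the operator factorization (\ref{fac2}) already used for \hyperref[th:3Dcons]{Theorem~\ref*{th:3Dcons}}, using the homothety to turn the tangential correspondence into an ordinary $F$-neighbour. First I would record the dictionary: since the definition gives $\gamma^i\tilde r_i=r+a^i\dot r=(1+a^iD)r$, the running point of $H_{\gamma^i}(C_i)$ is exactly $(1+a^iD)r$, i.e.\ $H_{\gamma^i}(C_i)$ is the $F$-neighbour of $C$ with coefficient $a^i$ in the sense of (\ref{ri}). Consequently $F(C,H_{\gamma^i}(C_i),H_{\gamma^j}(C_j))$ is nothing but the ordinary tangential-map curve built from $a^i,a^j$, and by (\ref{fac2}) its running point is
\[
 R=(1+a^j_iD)(1+a^iD)r=(1+a^i_jD)(1+a^jD)r,
\]
with $a^j_i,a^i_j$ given algebraically by (\ref{Ta}).

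Existence I would then obtain by definition: put $r_{ij}:=R/(\gamma^i\gamma^j)$ as the point of $C_{ij}$, so that the required identity $H_{\gamma^i\gamma^j}(C_{ij})=F(C,H_{\gamma^i}(C_i),H_{\gamma^j}(C_j))$ holds tautologically. It remains to check that $C_{ij}$ is in tangential correspondence with $C_i$ and $C_j$ under the swapped parameters. The two ``forward'' relations drop out of the two factorizations of $R$: since $\tilde r_i=(1+a^iD)r/\gamma^i$ one finds $\gamma^j r_{ij}=(1+a^j_iD)\tilde r_i$, and symmetrically $\gamma^i r_{ij}=(1+a^i_jD)\tilde r_j$, which say precisely that the tangent to $C_i$ meets $H_{\gamma^j}(C_{ij})$ in $H_{\gamma^j}(r_{ij})$ and the tangent to $C_j$ meets $H_{\gamma^i}(C_{ij})$ in $H_{\gamma^i}(r_{ij})$.

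The substantive part, which I expect to be the main obstacle, is the reverse direction together with the identification of the angles; here the Darboux/Riccati structure of this subsection is essential. Because the ratio of leading to free coefficient is an invariant of the tangential correspondence, the function $R$ (hence $r_{ij}$) again satisfies a second-order equation $\ddot r_{ij}+u_{ij}\dot r_{ij}+\lambda r_{ij}=0$ with the \emph{same} $\lambda$. As $\lambda$ and $\gamma^j$ are fixed, the Darboux parameter of the step $C_i\to C_{ij}$ is forced to equal $\mu^j=\lambda/(1-(\gamma^j)^2)$, the same parameter as the step $C\to C_j$; this is the permutability pattern, and it is what makes the angles swap, in complete analogy with the proof of \hyperref[th:loxodrome]{Theorem~\ref*{th:loxodrome}}. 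Concretely I would verify, exactly as there, that $a^j_i$ solves the reduced Riccati pair for the angle $\gamma^j$ relative to $C_i$, so that the ``backward'' relation $\gamma^j\tilde r_i=(1+b D)r_{ij}$ holds for a suitable coefficient $b$; this single computation reproduces (\ref{rurr}) with the permuted data and closes the correspondence.

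Uniqueness I would deduce from the algebraic nature of the superposition. The correspondence with $C_i$ by itself leaves a quadrature's worth of freedom (a solution of a Riccati equation), but imposing simultaneously the correspondence with $C_j$ pins the coefficient to the single value $a^j_i$ furnished by the rational formula (\ref{Ta}); hence $R$, and with it $r_{ij}$, is determined by $C,C_i,C_j$ with no integration required. This is exactly the ``genuine Darboux transformation leading to the nonlinear superposition principle'' promised after \hyperref[th:loxodrome]{Theorem~\ref*{th:loxodrome}}: the permutability formula is the rational map (\ref{Ta}) itself.
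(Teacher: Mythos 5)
Your proposal is correct and follows essentially the same route as the paper: uniqueness and the displayed formula are read off from the definitions (the forward halves of the two correspondences force $H_{\gamma^i\gamma^j}(r_{ij})$ to be the intersection of the two tangents), and the substantive verification is exactly the Riccati check that the new coefficient given by (\ref{Ta}) satisfies $\dot A+1-u_jA+\mu^iA^2=0$ (equivalently, your symmetric version with $u_i$, $\mu^j$), the paper merely recording in addition the resulting algebraic form (\ref{TaF4}).
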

\begin{proof}
It follows directly from the definitions of the tangential map and the tangential
correspondence that if such a curve $C_{ij}$ exists then it is unique and is given by
the above formula. So we only need to verify that this curve is indeed in the
tangential correspondence with $C_i$, $C_j$, by use of the relations
\[
 \dot a^k+1=ua^k-\mu^k(a^k)^2,\quad u_k=-u+2\mu^ka^k+\frac{2}{a^k},\quad k=i,j.
\]
It is easy to prove that the formula (\ref{Ta}) takes, in virtue of these constraints,
the algebraic form
\begin{equation}\label{TaF4}
 a^i_j=\frac{a^i-a^j}{a^j(\mu^ia^i-\mu^ja^j)},
\end{equation}
and that the coefficient $A=a^i_j$ identically satisfies the Riccati equation
\[
 \dot A+1-u_jA+\mu^iA^2=0.
\]
This means that the curves $C_{ij}$, $C_j$ are in tangential correspondence, with
parameter $\gamma^i$.
\end{proof}

It should be noticed that equation (\ref{rurr}) for $r$ defines the spectral problem
for the $\sinh$-Gordon equation, mapping (\ref{TaF4}) is equivalent to the mapping
($F_4$) from \cite{ABS4}, and the substitution $a^i=v/v_i$, $u=2\dot v/v$ brings to
the equation
\[
 v_{ij}(v_j-v_i)=v(\mu_iv_j-\mu_jv_i)
\]
which is equivalent to the Hirota equation \cite{H}.

Analogously, the mapping (\ref{Ta'}) admits the similar reduction
\[
 \dot a^k=u+\gamma^k-(a^k)^2,\quad u_k=-u-2\gamma^k+2(a^k)^2
\]
which brings to the mapping
\[
 a^i_j=-a^j+\frac{\gamma^i-\gamma^j}{a^i-a^j}.
\]
This is one of the forms of nonlinear superposition principle of Darboux
transformation for the Shr\"odinger operator \cite{VS}.

\section{Further generalizations}

\subsection{Discrete tangential map}

Let us consider discrete curves $r=r(n)$, then an analog of equation (\ref{ri}) reads
\[
 r_i=r+a^i(T-1)(r),\quad T:n\mapsto n+1.
\]
The compatibility condition of such equations
\begin{align*}
 r_{ij}&=(1+a^i_j(T-1))(1+a^j(T-1))(r)\\
       &=(1+a^j_i(T-1))(1+a^i(T-1))(r).
\end{align*}
brings to the relations
\[
 T(a^j)a^i_j=T(a^i)a^j_i,\quad (1-a^j)a^i_j+a^j=(1-a^i)a^j_i+a^i,
\]
which yield a discrete analog of tangential map (\ref{Ta}):
\begin{equation}\label{dTa}
 T_j(a^i)=a^i_j=\frac{(a^i-a^j)T(a^i)}{(1-a^j)T(a^i)-(1-a^i)T(a^j)}.
\end{equation}
The substitution $a^i=T(v)/v_i$ brings to an analog of map (\ref{TTv}):
\begin{equation}\label{dTTv}
 f:(v,v_i,v_j)\mapsto v_{ij}=\frac{v_iv_jT(v_j-v_i)}{T(v)(v_j-v_i)}
  +\frac{T(v_i)v_j-v_iT(v_j)}{v_j-v_i}.
\end{equation}
The symmetric form of this equation
\[
 \frac{T(v_j-v_i)}{T(v)}+\frac{T(v_i)-v_{ij}}{v_i}+\frac{v_{ij}-T(v_j)}{v_j}=0,
\]
demonstrates that the shift $T$ is actually on the equal footing with $T_i$ and $T_j$.
Difference substitutions relate this equation to the discrete Toda and KP equations
(in particular, the variable $v$ is identified as the wave function of the linear
problem for KP equation \cite{KS}). Alternative geometric interpretations of these
equations can be found in the papers \cite{D,KS}, see also \cite{ABS9} where a general
theory of this class of equations is developed.

The 3D-consistency property of the maps (\ref{dTa}) and (\ref{dTTv}) is formulated by
the same general identities (\ref{TTa}), (\ref{TTTv}) as in the continuous case, and
is proved along the lines of the proof of \hyperref[th:3Dcons]{Theorem
\ref*{th:3Dcons}}. There exist also the simple geometric explanation of this
property\footnote{This proof is due to W.K. Schief}: the triangles
$r_{12}(n)r_{13}(n)r_{23}(n)$ and $r_{12}(n+1)r_{13}(n+1)r_{23}(n+1)$ are perspective
with respect to the line $r(n+1)r(n+2)$ (marked by $n+1$ on
\hyperref[fig:discrete]{Fig. \ref*{fig:discrete}}), therefore, accordingly to
Desargues theorem, the lines $r_{12}(n)r_{12}(n+1)$, $r_{13}(n)r_{13}(n+1)$ and
$r_{23}(n)r_{23}(n+1)$ are concurrent, as required.

\begin{figure}[t!]
\centerline{\includegraphics[width=8cm]{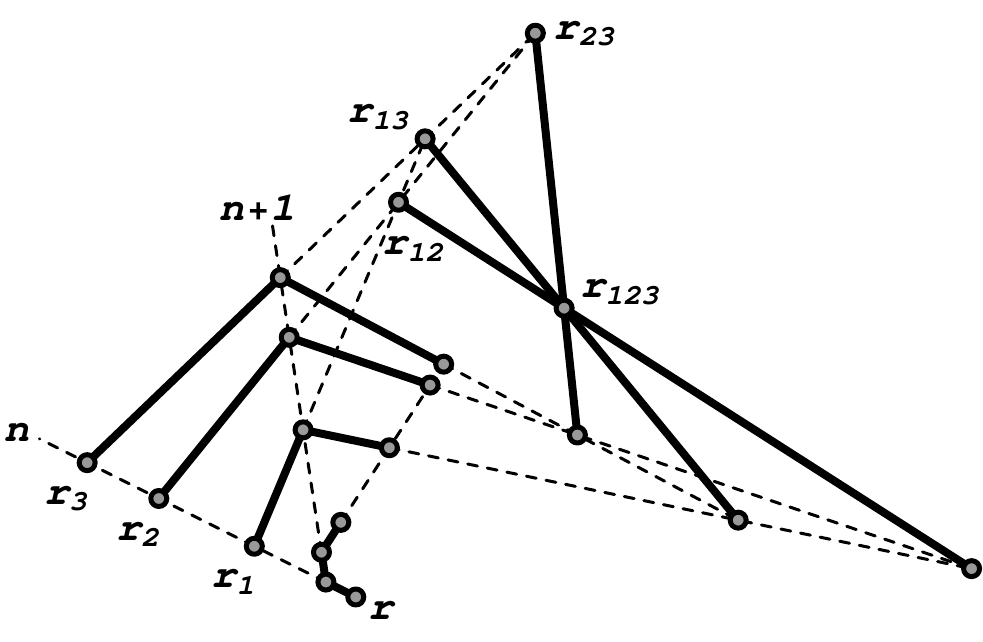}}
\caption{3D-consistency of the discrete tangential map}\label{fig:discrete}
\end{figure}

\subsection{The higher order maps}

Equation (\ref{ri}) can be replaced by
\[
 r_i(t)=r(t)+a^{1,i}(t)\dot r(t)+\dots+a^{m,i}(t)D^m(r(t)),
\]
for the curves in a space of dimension greater or equal to $2m$. In the discrete case
one has analogously
\[
 r_i(n)=r(n)+a^{1,i}(n)(r(n+1)-r(n))+\dots+a^{m,i}(n)(r(n+m)-r(n)).
\]
It is not difficult to demonstrate that the compatibility condition
$T_jT_i(r)=T_iT_j(r)$ is equivalent to a system of $2m$ equations which can be solved
in the form of differential or difference mapping
\[
 (a^{1,i},\dots,a^{m,i},a^{1,j},\dots,a^{m,j})\mapsto
 (a^{1,i}_j,\dots,a^{m,i}_j,a^{1,j}_i,\dots,a^{m,j}_i),
\]
with the derivatives or shifts up to $m$-th order. These mappings are 3D-consistent,
which can be proved by arguments analogous to the case $m=1$ just considered.
Unfortunately, the explicit form of these mappings is too bulky already at $m=2$, so
it would be desirable to find some reduction lowering their order and/or number of
fields.

\paragraph{Acknowledgements.} This work was supported by RFBR grants 08-01-00453 and
SS-3472.2008.2.


\end{document}